\newcommand{\perioddoubling}{\msf{T}}
\begin{document}

\title{Let's Make a Difference\,!\thanks{%
  An earlier version of this paper appeared in~\cite{la:roel:2009}.%
}}

\author{%
  J\"{o}rg Endrullis \and Dimitri Hendriks \and Jan Willem Klop
}

\date{}

\maketitle

\begin{dedication}
  Dedicated to Roel de Vrijer on the occasion of his 60th birthday.
\end{dedication}

\begin{abstract}
We study the behaviour of iterations of the difference operator~$\sdiff$ 
on streams over $\{0,1\}$.
In particular, we show that a stream $\astr$ is eventually periodic 
if and only if the sequence of differences 
$\astr,\diff{\astr},\diffn{2}{\astr},\ldots$,
the `$\sdiff$-orbit' of $\astr$ as we call it, is eventually periodic.
Moreover, we generalise this result to operations $\sblockdiff{d}$
that sum modulo $2$ the elements of each consecutive block of length $d+1$ 
in a given \mbox{$01$-stream}.
Some experimentation with $\sdiff$-orbits of well-known streams reveals a surprising
connexion between the \Sierpinski{} stream and the Mephisto Waltz.

\end{abstract}

\section{Introduction}

In previous work~\cite{{endr:grab:hend:isih:klop:2007},{endr:grab:hend:isih:klop:2009},{endr:grab:hend:2008}} 
we have been interested in definability of $01$-streams by means of fixed point
equations in a certain restricted format (PSF, pure stream format), restricted enough to guarantee
decidability of productivity, a notion of well-definedness. The format PSF was expressive enough
to encompass all automatic sequences~\cite{allo:shal:2003}.
In the course of those investigations we often employed as illustrations some well-known streams, 
such as the \thuemorse{} sequence~$\morse$, 
the Toeplitz or period doubling sequence~$\perioddoubling$, 
the Fibonacci stream~$\fib$, 
the \Sierpinski{} stream~$\sierpinski$, and the Mephisto Waltz~$\mephisto$.
For definitions of these streams see Table~\ref{table:specs};
for more background see~\cite{klop:2009}.	

Apart from the expressivity or definability aspect, we also were and are very interested
in relations between such streams: can we transform one stream into another, 
employing a certain arsenal of transformations --- 
such as e.g.\ finite state transducers (FSTs), or, equivalently, 
unary contexts in the PSF-format.

One striking, well-known transformation is that of $\morse$ into $\perioddoubling$ 
using the `first difference operator' $\sdiff$, 
defined by $\nth{\diff{\astr}}{n} = \nth{\astr}{n} + \nth{\astr}{n+1}$, 
for all $01$-streams~$\astr$ and $n\in\nat$,
where $+$ is addition modulo $2$,
or, in the PSF format:
\begin{align*}
  \diff{\cons{x}{\cons{y}{\astr}}}
  & \to \cons{(x+y)}{\diff{\cons{y}{\astr}}}
\end{align*}
and with an equivalent FST as in Figure~\ref{fig:diff-fst}.
\begin{figure}[ht!]
  \begin{center}
    \scalebox{1}{\includegraphics{./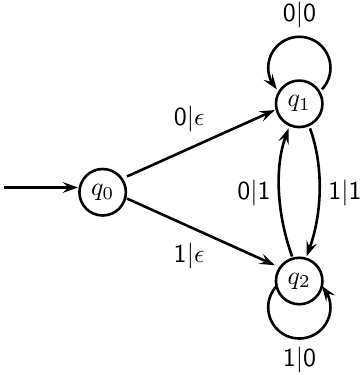}}
  \end{center}
  \caption{%
    \textit{An FST implementing the $\sdiff$ operator.}
  }
  \label{fig:diff-fst}
\end{figure}

Now $\diff{\morse} = \perioddoubling$, as one easily verifies:
\begin{align*}
  \morse    & =   0110  1001  1001  0110  1001  0110  0110  1001 \ldots \\
  \perioddoubling & = \, 101 1 101 0 101 1 101 1 101 1 101 0 101 1 101 \ldots
\end{align*}
A first question now presents itself: what do we encounter
by iterating $\sdiff$, so that we get the `$\sdiff$-orbit' of $\morse$:
\begin{align*}
  \morse, 
  \diff{\morse}, \diffn{2}{\morse}, \diffn{3}{\morse}, \ldots 
\end{align*}
A visual impression is given by Figure~\ref{fig:diffn:morse}.
\begin{figure}[ht!]
  \begin{center}
  \begin{minipage}{.50\textwidth}
  \begin{center}
   \zoom{./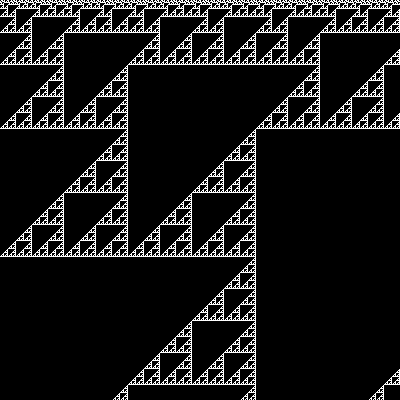}{./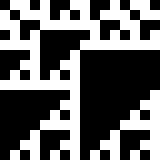}
  \end{center}
  \end{minipage}
  \end{center}
  \caption{%
    \textit{The first 400 iterations of $\sdiff$ on the \thuemorse{} sequence (top row);
      $0$s are black, $1$s are white.}
  }
  \label{fig:diffn:morse}
\end{figure}
We will prove that this $\sdiff$-orbit of $\morse$ is not periodic ---
that is, all streams $\diffn{n}{\morse}$ are mutually different.%
\footnote{%
  In fact, we can give explicit expressions for the iterations of $\sdiff$
  (see~\cite{klop:2002}): \\
  $\diffn{2n}{\morse} = \zip{\diffn{n}{\morse}}{\diffn{n}{\morse}}$
  and $\diffn{2n+1}{\morse} = \zip{\zeros}{\diffn{n+1}{\morse}}$.
}

This non-periodicity fact is a corollary of Theorem~\ref{thm:blockdifforbit:periodicity} 
below stating that the $\sdiff$-orbit $\astr, \diff{\astr}, \diffn{2}{\astr}, \ldots$ 
of an arbitrary stream $\astr\in\str{\{0,1\}}$
is eventually periodic if and only if $\astr$ is eventually periodic. 
We also generalise this periodicity theorem to 
operations~$\sblockdiff{d}$
that we call `$d{+}1$-block difference'; $\sdiff$ is then the $2$-block difference~$\sblockdiff{1}$.

Next, we observe that the difference matrix with top row $\morse$ (see Figure~\ref{fig:diffn:morse}) 
exhibits ever growing triangles of zeros. It is as if repeated application of $\sdiff$ tends 
to damp out the volatility of the stream $\morse$, so that in $\diffn{n}{\morse}$ 
ever larger stretches of $0$'s appear.
We wondered whether this is a general phenomenon, 
and therefore we determined the $\sdiff$-orbit of some other streams,
starting with the Fibonacci stream~$\fib$, see Figure~\ref{fig:diffn:fib}. 
The result is strikingly different from the $\sdiff$-orbit of $\morse$: 
the black triangles now seem uniformly bounded in size. 
So the `damping out' effect that $\sdiff$ had on $\morse$, 
is by no means general. 
We also give an example showing how significant information can be detected 
from a consideration of these `fingerprint' patterns exhibited by the $\sdiff$-orbits, 
displayed as matrices as in Figures~\ref{fig:diffn:morse}, \ref{fig:diffn:kai}, \ref{fig:diffn:fib},
and~\ref{fig:diffn:sierpinski:mephisto}.
Namely, in an experiment it turned out 
(see Figure~\ref{fig:diffn:sierpinski:mephisto})
that the $\sdiff$-matrix of the \Sierpinski{} stream~$\sierpinski$ 
and the Mephisto Waltz~$\mephisto$ of Keane~\cite{kean:1968}
are after the first couple of rows exactly the same! 
In this we way, by comparing these fingerprints, 
we found that
\begin{align*}
	\diffn{2}{\sierpinski} & = \diffn{3}{\mephisto}
\end{align*}
a curious fact that seems hard to find or guess otherwise, 
because $\sierpinski$ and $\mephisto$ seem totally unrelated in their definition.

\section{Generalized Difference Operators}

The `$d{+}1$-block difference'~$\blockdiff{d}{\astr}$ of a bitstream 
is the stream obtained by adding modulo $2$, 
each block of $d+1$ consecutive elements of $\astr$,
that is:
\begin{align*}
  \nth{\blockdiff{d}{\astr}}{i} 
  = \nth{\astr}{i} + \cdots + \nth{\astr}{i+d}
\end{align*}
So we have a `sliding window' of length $d+1$ moving through the stream~$\astr$.

Some preliminary remarks are in order.
Let $\bit = \{0,1\}$. 
For $a,b\in\bit$, we write $a \xor b$ for the sum (or difference) of $a$ and $b$ modulo~$2$,
and $\inv{a}$ for the inverse of $a$ defined by $\inv{a} = a \xor 1$.
We use $\str{\bit}$ to denote the set 
$\str{\bit} = \{ \astr \where \astr \funin \nat\to\bit \}$ of
infinite words (streams) over the alphabet $\bit$.

\begin{definition}\label{def:diff}\normalfont
  For $d\in\nat$, the 
  operator $\sblockdiff{d}\funin\setop{\str{\bit}}$ is defined as follows:
  \begin{align*}
    \nth{\blockdiff{d}{\astr}}{i} & = \summ{j=0}{d}{\nth{\astr}{i+j}}
    &\text{for all $\astr\in\str{\bit}$ and $i\in\nat$.}
  \end{align*}
  We call $\blockdiff{d}{\astr}$ the \emph{$d+1$-block difference of $\astr$},
  and we define $\sdiff$ to be $\sdiff = \sblockdiff{1}$.
  
  The \emph{$\sblockdiff{d}$-orbit} of a stream~$\astr\in\str{\bit}$,
  which we denote by $\blockdifforbit{d}{\astr}$,
  is defined as the infinite sequence of iterated block differences of $\astr$:
  \begin{align*}
    \blockdifforbit{d}{\astr} 
    & \defdby (\blockdiffn{d}{n}{\astr})_{n=0}^{\infinity}
  \end{align*}
  We write $\difforbit{\astr}$ for $\blockdifforbit{1}{\astr}$.
\end{definition}
There is a close correspondence between 
the $n$-th iteration of $\sdiff$ and the triangle of Pascal:
\begin{align}
  \nth{\diffn{n}{\astr}}{i} = \sum_{k=0}^n \binom{n}{k} \cdot \nth{\astr}{i+k}
  \label{eq:diffn:triangle:correspondence}
\end{align}
A quick hint for this is obtained by inspecting the $\sdiff$-orbit of a stream $\pointstream{p}$
which is $0$ everywhere except at position $p$; 
so let $\pointstream{p}\in\str{\bit}$ be defined by 
\begin{align*}
  \nth{\pointstream{p}}{p} = 1 && \text{and} && \nth{\pointstream{p}}{n} = 0 \quad \text{if $n\neq p$.}
\end{align*}
Figure~\ref{fig:diffn:kai} 
\begin{figure}[ht!]
  \begin{center}
    \scalebox{.4}{\includegraphics{./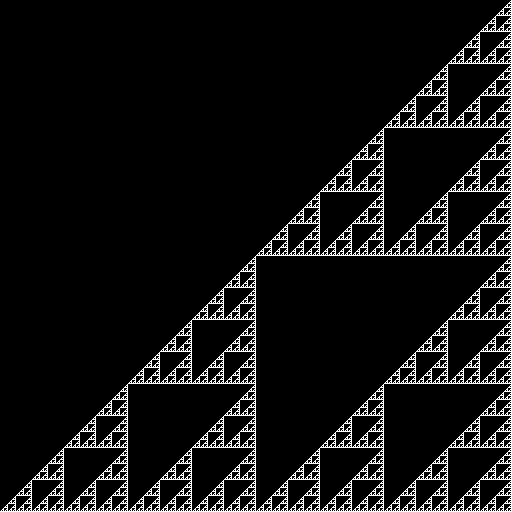}}
  \end{center}
  \caption{%
    \textit{A $512{\times}512$-cut of the $\sdiff$-orbit $\difforbit{\hspace{.08em}\pointstream{511}}$.
    }
  }
  \label{fig:diffn:kai}
\end{figure}
pictures the $p{\times}p$-cut of the $\sdiff$-orbit $\difforbit{\pointstream{p-1}}$, for $p = 2^9$. 
Not displayed is the infinite darkness to the right. 
Vertically the figure repeats itself, as we explain below.
We clearly find back, though somewhat slanted, the \Sierpinski{} triangle,
which is the limit, as the number of rows approaches infinity, of Pascal's triangle modulo $2$.
The last row $\diffn{p-1}{\pointstream{p-1}}$ in Figure~\ref{fig:diffn:kai} 
consists of $p$ ones (followed by infinitely many zeros), 
and so the first row not in the picture will be equal to $\pointstream{p-1}$ (top-row) again.
Hence, the period of this $\sdiff$-orbit is $p$.

To obtain a more general result for $\sblockdiff{d}$
similar to~\eqref{eq:diffn:triangle:correspondence},
we use a generalisation of Pascal's triangle.
First we recall the recursion equation for $\binom{n}{k}$, 
the entry at row~$n$, column~$k$ of Pascal's triangle:
\begin{align*}
  \binom{n}{k} & = \binom{n-1}{k} + \binom{n-1}{k-1}
\end{align*}
adding up the values in columns $k$ and $k-1$ of the previous row $n-1$.
In triangles $\spascal{d}$ defined below, with $d\in\nat$, 
the value $\pascal{d}{n}{k}$ at row~$n$, column~$k$ is the sum 
$\pascal{d}{n}{k} = \pascal{d}{n-1}{k} + \pascal{d}{n-1}{k-1} + \cdots + \pascal{d}{n-1}{k-d}$.
\begin{definition}\label{def:pascal}\normalfont
  Let $d\in\nat$. We define \emph{triangle $\spascal{d}\funin\nat\times\ints\to\nat$}
  as follows:
  \begin{align*}
    \pascal{d}{0}{0} & = 1 \\
    \pascal{d}{n}{k} 
    & = \summ{i=0}{d}{\pascal{d}{n-1}{k-i}} & (0 \leq k \leq dn , n > 0) \\
    \pascal{d}{n}{k} & = 0 & (k < 0 \,\text{ or }\, k \gt dn)
  \end{align*}
  and we write $\pascalrow{d}{n}$ for the \emph{$n$-th row} of triangle $\spascal{d}$, 
  that is, for the sequence:
  \begin{align*}
    \pascalrow{d}{n} & = \pascal{d}{n}{0} \,,\, \ldots \,,\, \pascal{d}{n}{dn}
  \end{align*}
\end{definition}
Note that triangle~$\spascal{1}$ is the usual Pascal triangle:
\[
  \pascal{1}{n}{k} = \binom{n}{k}
\]

As an example, the first couple of rows of triangle~$\spascal{2}$
are shown in Figure~\ref{fig:pascal2}.
\begin{figure}[h!]
  \newcommand{\A}{\iddots}
  \newcommand{\B}{\vdots}
  \newcommand{\C}{\ddots}
  \newcolumntype{Y}{>{$}c<{$}} 
  \centering{
  \scalebox{.8}{
  \begin{tabularx}{1\linewidth}{YYYYYYYYYYYYYYY}
       &   &   &    &    &    &     &   1 &     &    &    &    &   &   & \\ 
       &   &   &    &    &    &   1 &   1 &   1 &    &    &    &   &   & \\ 
       &   &   &    &    &  1 &   2 &   3 &   2 &  1 &    &    &   &   & \\ 
       &   &   &    &  1 &  3 &   6 &   7 &   6 &  3 &  1 &    &   &   & \\ 
       &   &   &  1 &  4 & 10 &  16 &  19 &  16 & 10 &  4 &  1 &   &   & \\ 
       &   & 1 &  5 & 15 & 30 &  45 &  51 &  45 & 30 & 15 &  5 & 1 &   & \\
       & 1 & 6 & 21 & 50 & 90 & 126 & 141 & 126 & 90 & 50 & 21 & 6 & 1 & \\[-.5ex]      
    \A &   &   &    &    &    &     &  \B &     &    &    &    &   &   & \C     
  \end{tabularx}
  }
  }
  \caption{\textit{Triangle $\spascal{2}$.}} 
  \label{fig:pascal2}
\end{figure}
\begin{remark}\label{rem:pascal:clean}
  Perhaps a cleaner, but notationally heavier, definition of $\pascal{d}{n}{k}$ 
  is to define it only for values $0\leq k\leq dn$,
  as follows:
  \begin{align}
    \pascal{d}{n}{k}
    & = \summ{i=\max(0,k-d(n-1))}{\min(d,k)}{\pascal{d}{n-1}{k-i}}
    \tag{$\sharp$}
    \label{eq:pascal:clean}
  \end{align}
  so that the summation is only over the non-zero (defined) values of the previous row,
  as we have $0 \leq k-i \leq d(n-1)$ due to the range of the index variable~$i$.
  The sum in~\eqref{eq:pascal:clean} is easily seen to be equal to the sum
  $\summ{i=0}{d}{\pascal{d}{n-1}{k-i}}$ of Definition~\ref{def:pascal}:
  in the latter we allow the 
  variable $i$ to also go through 
  the values smaller than $k-d(n-1)$, and values greater than $k$. 
  This does not change the outcome, 
  because for these values of $i$ we get that $k-i \gt d(n-1)$, 
  and $k-i \lt 0$, respectively, and thus $\pascal{d}{n-1}{k-i} = 0$ 
  by definition of $\spascal{d}$.
\end{remark}

In the area of combinatorial mathematics 
Pascal's triangle has been generalised in the way we do here, 
e.g.\ in~\cite{freund:1956}, where $\pascal{d}{n}{k}$ means (our notation):
\begin{quotation}
  \noindent
  \textit{%
  $\pascal{d}{n}{k}$ is the number of distinct ways in which $k$ indistinguishable objects 
  can be distributed in $n$ cells allowing at most $d$ objects per cell.
  }~\cite{freund:1956}
\end{quotation}

\newcommand{\agraph}{G}
\newcommand{\iagraph}{\sub{\agraph}}
\newcommand{\avertices}{V}
\newcommand{\iavertices}{\sub{\avertices}}
\newcommand{\aedges}{E}
\newcommand{\iaedges}{\sub{\aedges}}
Alternatively, one can view a triangle $\spascal{d}$ as a graph
$\iagraph{d} = \pair{\iavertices{d}}{\iaedges{d}}$ where:
\begin{align*}
  \iavertices{d} & = \{\, \pair{n}{k} \where n\in\nat ,\, 0 \leq k \leq dn \,\} \\
  \iaedges{d}    & = \{\, \pair{\pair{n}{k}}{\pair{n+1}{k+j}} \where 0 \leq j \leq d \,\}
\end{align*}
Then the value $\pascal{d}{n}{k}$ is the number of paths from $\pair{0}{0}$, 
the root of $\iagraph{d}$, to the vertex~$\pair{n}{k}$.

The following fact about double summations 
will be encountered several times in the sequel.
For all $\chi\funin\nat\times\nat\to\nat$, 
and $p,q\in\nat$:
\begin{align}
  \summ{i=0}{p}{\summ{j=0}{q}{\,\funap{\chi}{i,j}}}
  = \summ{\phantom{(}k=0\phantom{)}}{p+q\vphantom{()}}{\summ{i=\max(0,k-q)}{\min(p,k)}{\funap{\chi}{i,k-i}}}
  \tag{$\Sigma\!\Sigma$}
  \label{eq:swap:sums}
\end{align}
See for instance the proof of Lemma~\ref{lem:blockdiffn:triangle:correspondence},
where \eqref{eq:swap:sums} is used to move a subexpression of $\chi$ 
which only contains index variable $j$, out of the scope of the summation that binds variable $i$.

The following lemma generalises a familiar property of Pascal's triangle,
namely that the sum of values in the $n$-th row equals $2^n$.
%
%
\begin{lemma}
  \[
    \summ{k=0}{dn}{\pascal{d}{n}{k}} = d^n
  \]
\end{lemma}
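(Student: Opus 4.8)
The plan is to establish the row-sum identity by induction on $n$ using the defining recursion of $\spascal{d}$ in Definition~\ref{def:pascal}, with the path-counting description of $\pascal{d}{n}{k}$ as an independent cross-check. The guiding observation is that each entry of row~$n$ is a sum of $d+1$ entries of row~$n-1$, and that as $k$ sweeps across the whole of row~$n$ every entry of row~$n-1$ is counted exactly $d+1$ times; so passing from one row to the next should multiply the total by $d+1$. Already this flags a discrepancy with the printed right-hand side, which I return to below.

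First I would set $S_n \defdby \summ{k=0}{dn}{\pascal{d}{n}{k}}$, with base case $S_0 = \pascal{d}{0}{0} = 1$. For the inductive step, the boundary convention $\pascal{d}{n}{k} = 0$ for $k \lt 0$ or $k \gt dn$ lets me extend the outer summation to range over all $k\in\ints$ without altering its value; the same convention makes the recursion $\pascal{d}{n}{k} = \summ{i=0}{d}{\pascal{d}{n-1}{k-i}}$ valid for every $k\in\ints$, so I may substitute it and swap the two summations:
\begin{align*}
  S_n
  &= \sum_{k\in\ints} \summ{i=0}{d}{\pascal{d}{n-1}{k-i}}
   = \summ{i=0}{d}{\sum_{k\in\ints} \pascal{d}{n-1}{k-i}} \\
  &= \summ{i=0}{d}{S_{n-1}}
   = (d+1)\,S_{n-1}.
\end{align*}
Here, for each fixed $i$, the inner sum equals $S_{n-1}$ after a harmless re-indexing of the bi-infinite summation variable. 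Solving $S_n = (d+1)S_{n-1}$ with $S_0 = 1$ gives $S_n = (d+1)^n$.

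As a cross-check, recall that $\pascal{d}{n}{k}$ counts the paths in the graph $\iagraph{d}$ from the root $\pair{0}{0}$ to $\pair{n}{k}$, each such path taking $n$ steps and advancing the column at each step by some $j\in\{0,\ldots,d\}$. Summing over all endpoints thus counts every length-$n$ path out of the root, and there are $(d+1)^n$ of these since the $n$ steps are chosen independently from $d+1$ options each; hence $\summ{k=0}{dn}{\pascal{d}{n}{k}} = (d+1)^n$ once more. Both routes return $(d+1)^n$ rather than the printed $d^n$, and for $d=1$ this recovers the familiar Pascal row sum $2^n$ recalled just above the lemma; I therefore read the stated right-hand side as a misprint for $(d+1)^n$, and it is this value that the proof establishes. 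The one point demanding care --- rather than a genuine obstacle --- is the index bookkeeping in the extension to $k\in\ints$ and the subsequent re-indexing, which is precisely the double-sum manipulation underlying the displayed chain.
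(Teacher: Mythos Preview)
Your proof is correct and follows essentially the same route as the paper's: induction on $n$ together with a swap of the two summations in the inductive step (the paper carries out this swap via the clean recursion~\eqref{eq:pascal:clean} and the identity~\eqref{eq:swap:sums}, whereas you take the equivalent shortcut of extending the outer sum to all $k\in\ints$ under the zero boundary convention). You are also right about the misprint: the row sum is $(d+1)^n$, and the paper's own derivation contains the matching slip $\summ{i=0}{d}{(\cdots)} = d\cdot(\cdots)$ where $(d+1)\cdot(\cdots)$ is meant.
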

\begin{proof}
  By induction on $n$. 
  The case $n=0$ is trivial, as we have that $\pascal{d}{0}{0} = 1$ by definition.
  If $n=\prim{n}+1$ we reason as follows:
  \begin{align*}
    \summ{k=0}{dn}{\pascal{d}{n}{k}}
    & = \summ{k=0\vphantom{()}}{d\prim{n}+d\vphantom{()}}{\summ{i=\max(0,k-d\prim{n})}{\min(d,k)}{\pascal{d}{\prim{n}}{k-i}}} & \text{\eqref{eq:pascal:clean}} \\ 
    & = \summ{i=0}{d}{\summ{j=0}{d\prim{n}}{\pascal{d}{\prim{n}}{j}}} & \text{\eqref{eq:swap:sums}} \\
    & = d \cdot \summ{j=0}{d\prim{n}}{\pascal{d}{\prim{n}}{j}} \\
    & = d \cdot d^{\prim{n}}  &\text{(IH)}
  \end{align*}
\end{proof}

We are ready for the generalisation of~\eqref{eq:diffn:triangle:correspondence} 
on page~\pageref{eq:diffn:triangle:correspondence}.
The triangle $\spascal{d}$ (modulo $2$) can be used 
to relate the $n$-th block difference~$\blockdiffn{d}{n}{\astr}$ 
to the original stream $\astr$, as follows:
\begin{lemma}\label{lem:blockdiffn:triangle:correspondence}
  \begin{align*}
    \nth{\blockdiffn{d}{n}{\astr}}{i}
    & = \summ{k=0}{dn}{\pascal{d}{n}{k}} \cdot \nth{\astr}{i+k}
  \end{align*}
\end{lemma}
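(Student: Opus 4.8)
The plan is to argue by induction on $n$, peeling off one application of $\sblockdiff{d}$ at each step and then reading the double summation that appears --- after a reindexing supplied by \eqref{eq:swap:sums} --- as an instance of the defining recursion \eqref{eq:pascal:clean} of the triangle $\spascal{d}$. As throughout, the right-hand side is understood modulo~$2$, so a coefficient $\pascal{d}{n}{k}$ either annihilates its term or leaves $\nth{\astr}{i+k}$ unchanged; since \eqref{eq:swap:sums} and \eqref{eq:pascal:clean} are identities over $\nat$, they remain valid after reduction modulo~$2$, and the modular reading never interferes with the manipulation of the sums.

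The base case $n = 0$ is immediate: $\blockdiffn{d}{0}{\astr} = \astr$, the right-hand side collapses to the single term $\pascal{d}{0}{0} \cdot \nth{\astr}{i}$, and $\pascal{d}{0}{0} = 1$ by Definition~\ref{def:pascal}.

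For the inductive step, fix $n \geq 1$ and assume the claim for $n-1$ at every position. Using $\blockdiffn{d}{n}{\astr} = \blockdiff{d}{\blockdiffn{d}{n-1}{\astr}}$ and Definition~\ref{def:diff}, and then the induction hypothesis applied at position $i+j$, I would write
\[
  \nth{\blockdiffn{d}{n}{\astr}}{i}
  = \summ{j=0}{d}{\nth{\blockdiffn{d}{n-1}{\astr}}{i+j}}
  = \summ{j=0}{d}{\,\summ{k=0}{d(n-1)}{\pascal{d}{n-1}{k} \cdot \nth{\astr}{i+j+k}}}.
\]
Now I would invoke \eqref{eq:swap:sums} with $p = d$, $q = d(n-1)$, summation variable $m = j+k$, and $\funap{\chi}{j,k} = \pascal{d}{n-1}{k} \cdot \nth{\astr}{i+j+k}$. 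Because $\nth{\astr}{i+j+k} = \nth{\astr}{i+m}$ no longer depends on the inner index, it factors out, leaving
\[
  \nth{\blockdiffn{d}{n}{\astr}}{i}
  = \summ{m=0}{dn}{\Bigl( \summ{j=\max(0,m-d(n-1))}{\min(d,m)}{\pascal{d}{n-1}{m-j}} \Bigr) \cdot \nth{\astr}{i+m}}.
\]
By Remark~\ref{rem:pascal:clean}, the parenthesised inner sum is precisely $\pascal{d}{n}{m}$, so substituting and renaming $m$ back to $k$ yields the statement.

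The only point demanding care is matching the index ranges produced by \eqref{eq:swap:sums} to those in the clean form \eqref{eq:pascal:clean} of Definition~\ref{def:pascal}: once one chooses $m = j+k$, the bounds $\max(0, m - d(n-1))$ and $\min(d, m)$ emerge automatically from \eqref{eq:swap:sums} and agree verbatim with those in \eqref{eq:pascal:clean}, so no separate treatment of the boundary columns is needed. I expect this bookkeeping to be the main --- indeed the only --- obstacle; everything else is a routine unfolding.
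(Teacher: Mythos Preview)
Your proof is correct and follows essentially the same route as the paper's: induction on $n$, unfolding one layer of $\sblockdiff{d}$, applying the induction hypothesis, reindexing via \eqref{eq:swap:sums}, and then recognising the recursion for $\spascal{d}$. The only cosmetic difference is that the paper passes through the full-range form of Definition~\ref{def:pascal} before collapsing to $\pascal{d}{n}{\ell}$, whereas you identify the inner sum directly with the clean form~\eqref{eq:pascal:clean}; both are equally valid.
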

\begin{proof}
  By induction on $n$.
  For $n=0$ the statement directly follows by unfolding definitions.
  In case $n = \prim{n}+1$, we reason as follows:
  \begin{align*}
    \nth{\blockdiffn{d}{n}{\astr}}{i}
    & = \nth{\blockdiff{d}{\blockdiffn{d}{\prim{n}}{\astr}}}{i} \\ \displaybreak[0]
    & = \summ{j=0}{d}{\nth{\blockdiffn{d}{\prim{n}}{\astr}}{i+j}} \\ \displaybreak[0]
    & = \summ{j=0}{d}{\summ{k=0}{d\prim{n}}{\pascal{d}{\prim{n}}{k}} \cdot \nth{\astr}{i+j+k}} & \text{(IH)}\\ \displaybreak[0]
    & = \summ{\ell=0}{dn}{\summ{j=\max(0,\ell-d\prim{n})}{\min(d,\ell)}{\pascal{d}{\prim{n}}{\ell-j}\cdot\nth{\astr}{i+\ell}}} & \text{\eqref{eq:swap:sums}} \\ \displaybreak[0]
    & = \summ{\ell=0}{dn}{\summ{j=0}{d}{\pascal{d}{\prim{n}}{\ell-j}\cdot\nth{\astr}{i+\ell}}}  & \text{\eqref{eq:pascal:clean}} \\ \displaybreak[0] 
    & = \summ{\ell=0}{dn}{\Big(\summ{j=0}{d}{\pascal{d}{\prim{n}}{\ell-j}\Big)\cdot\nth{\astr}{i+\ell}}} \\ \displaybreak[0]
    & = \summ{\ell=0}{dn}{\pascal{d}{n}{\ell}\cdot\nth{\astr}{i+\ell}}
  \end{align*}
\end{proof}

\section{Periodic Orbits}
In this section we show that the $\sdiff$-orbit $\blockdifforbit{d}{\astr}$ of a bitstream $\astr\in\str{\bit}$ 
is eventually periodic if and only if the stream $\astr$ itself is eventually periodic.

As an instance of Lemma~\ref{lem:blockdiffn:triangle:correspondence} 
we obtain that by taking $\spascal{2}$ in Figure~\ref{fig:pascal2} 
we derive how the values of $\sblockdiffn{2}{4}$ are related 
to those of the original $\astr$, as follows:
\begin{align*}
  \blockdiffn{2}{4}{i} 
  &
  =
  1 \cdot \nth{\astr}{i}
    +  4 \cdot \nth{\astr}{i+1}
    + 10 \cdot \nth{\astr}{i+2}
    + 16 \cdot \nth{\astr}{i+3}
    + 19 \cdot \nth{\astr}{i+4} \\
  &
    \mathrel{\phantom{=}}
    \mbox{}
    + 16 \cdot \nth{\astr}{i+5}
    + 10 \cdot \nth{\astr}{i+6}
    +  4 \cdot \nth{\astr}{i+7}
    +  1 \cdot \nth{\astr}{i+8} \\
  &
  =
  \nth{\astr}{i} + \nth{\astr}{i+4} + \nth{\astr}{i+8}
\end{align*}
as only for $k = 0,4,8$ the entries $\pascal{2}{4}{k}$
are odd.
For iterations which are powers of $2$
this can be generalised, because, for $n=2^m$, 
the $n$-th row of a triangle $\spascal{d}$ modulo $2$
always has the following shape:
\begin{align*}
  \pascalrow{d}{n} = 1 \, 0^{n-1} \, 1 \, 0^{n-1} \, \cdots \, 0^{n-1} \, 1
  &&
  (n = 2^m)
\end{align*}
that is, $d+1$ many $1$s with blocks $0^{n-1}$ in between them.
This observation, translated to $\sdiff$-orbits in the following lemma, 
is crucial for the main result of this paper, Theorem~\ref{thm:blockdifforbit:periodicity},
for it enables us to pinpoint the periodicity in the orbit by looking at rows $2^m$
with $\congrmod{2^m}{0}{p}$, 
with $p$ the period of $\astr$.
\begin{lemma}\label{lem:blockdiffn{d}{2^m}}
  For $n$\, a power of $2$, we have:
  \begin{align*}
    \nth{\blockdiffn{d}{n}{\astr}}{i}
    & =
    \summ{j=0}{d}{\nth{\astr}{i+jn}}
  \end{align*}
\end{lemma}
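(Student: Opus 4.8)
The plan is to derive Lemma~\ref{lem:blockdiffn{d}{2^m}} directly from Lemma~\ref{lem:blockdiffn:triangle:correspondence} together with the stated shape of the row $\pascalrow{d}{n}$ modulo $2$ when $n$ is a power of $2$. Lemma~\ref{lem:blockdiffn:triangle:correspondence} already tells us that $\nth{\blockdiffn{d}{n}{\astr}}{i} = \summ{k=0}{dn}{\pascal{d}{n}{k}}\cdot\nth{\astr}{i+k}$, where the coefficients are, implicitly, taken modulo $2$ (since $\astr$ is a $01$-stream and addition is modulo $2$). So the whole content of the lemma is the combinatorial fact that, for $n=2^m$, the only columns $k$ in $\{0,1,\ldots,dn\}$ with $\pascal{d}{n}{k}$ odd are $k=0,n,2n,\ldots,dn$, and that at each of these columns the entry is odd. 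Granting that, the sum collapses to $\summ{j=0}{d}{\nth{\astr}{i+jn}}$, which is exactly the claimed right-hand side.

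First I would establish the row-shape claim $\pascalrow{d}{2^m} = 1\,0^{n-1}\,1\,0^{n-1}\cdots 0^{n-1}\,1$ (mod $2$), which the text asserts just before the lemma but does not prove in the excerpt. The cleanest route is via generating functions over $\mathbb{F}_2$: the recursion $\pascal{d}{n}{k} = \summ{i=0}{d}{\pascal{d}{n-1}{k-i}}$ says precisely that the row polynomial $P_{d,n}(x) = \summ{k=0}{dn}{\pascal{d}{n}{k}}x^k$ satisfies $P_{d,n}(x) = (1+x+\cdots+x^d)\cdot P_{d,n-1}(x)$, with $P_{d,0}(x)=1$, hence $P_{d,n}(x) = (1+x+\cdots+x^d)^n$. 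Working in $\mathbb{F}_2[x]$ and using the Frobenius identity $(a+b)^2 = a^2+b^2$, we get for $n=2^m$ that $P_{d,n}(x) \equiv (1+x+\cdots+x^d)^{2^m} \equiv 1 + x^{2^m} + x^{2\cdot 2^m} + \cdots + x^{d\cdot 2^m} = 1 + x^n + x^{2n} + \cdots + x^{dn} \pmod 2$. Reading off coefficients, $\pascal{d}{n}{k}$ is odd exactly when $k\in\{0,n,2n,\ldots,dn\}$, which is the asserted shape. (Alternatively one could give an induction on $m$ using $P_{d,2^{m}} = (P_{d,2^{m-1}})^2$ and the mod-$2$ squaring identity, but the generating-function phrasing makes the Frobenius step transparent.)

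Second, I would substitute this into Lemma~\ref{lem:blockdiffn:triangle:correspondence}. Since the only surviving terms in $\summ{k=0}{dn}{\pascal{d}{n}{k}}\cdot\nth{\astr}{i+k}$ are those with $k = jn$ for $j=0,1,\ldots,d$, and each such coefficient is $1$ modulo $2$, we obtain $\nth{\blockdiffn{d}{n}{\astr}}{i} = \summ{j=0}{d}{\nth{\astr}{i+jn}}$, completing the proof. One small point of care: the indexing in $\pascalrow{d}{n}$ runs over $0\le k\le dn$, and $dn = d\cdot 2^m$, so the topmost term $k=dn$ is indeed included, matching the $j=d$ term; there is no off-by-one issue at the right end.

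I do not expect a serious obstacle here — the argument is short once the row-shape fact is pinned down. The only thing requiring a little attention is being explicit that the coefficients in Lemma~\ref{lem:blockdiffn:triangle:correspondence} are to be reduced modulo $2$ before multiplying by the bits of $\astr$; this is implicit in the earlier development (the running example computing $\blockdiffn{2}{4}$ already does exactly this reduction), so it should be uncontroversial to invoke. If one wanted to avoid generating functions entirely, the fallback is a direct induction on $m$ showing that squaring a polynomial with $\{0,1\}$ coefficients over $\mathbb{F}_2$ just spreads the support out by a factor of $2$; this is the same computation, merely spelled out without the $P_{d,n}$ notation.
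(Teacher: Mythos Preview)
Your argument is correct, but it takes a different route from the paper's own proof. The paper does \emph{not} invoke Lemma~\ref{lem:blockdiffn:triangle:correspondence} here at all; instead it proves the statement by a direct induction on $m$. In the inductive step it factors $\sblockdiffn{d}{2^{m}} = \sblockdiffn{d}{2^{m-1}} \circ \sblockdiffn{d}{2^{m-1}}$, applies the induction hypothesis twice to get a double sum $\sum_{j=0}^{d}\sum_{k=0}^{d}\nth{\astr}{i+(j+k)2^{m-1}}$, reindexes via the $(\Sigma\Sigma)$ identity, and then observes that the multiplicity of each term has the correct parity (odd for even $\ell$, even for odd $\ell$), so only the terms with $\ell$ even survive, yielding $\sum_{j=0}^{d}\nth{\astr}{i+j\cdot 2^{m}}$.

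Your approach is in some sense closer to the paper's own motivating remark just before the lemma (the asserted shape of $\pascalrow{d}{2^m}$ modulo $2$), which the paper states but never actually proves. By going through generating functions and Frobenius over $\mbb{F}_2$, you both justify that row-shape claim and derive the lemma in one stroke; this is shorter and more conceptual. The paper's induction, on the other hand, is entirely elementary (no polynomial algebra, no appeal to Lemma~\ref{lem:blockdiffn:triangle:correspondence}) and self-contained at the level of index manipulations. Either proof is fine; yours has the bonus of filling the small gap in the paper's exposition of the row-shape fact.
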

\begin{proof}
  Let $n=2^m$. The proof proceeds by induction on $m$.
  The base case $m=0$ follows directly by definition of $\sblockdiff{d}$.
  If $m = \prim{m}+1$, we let $\prim{n} = 2^{\prim{m}}$ and we infer:
  \begin{align*}
    \nth{\blockdiffn{d}{n}{\astr}}{i}
    & = \nth{\blockdiffn{d}{\prim{n}}{\blockdiffn{d}{\prim{n}}{\astr}}}{i} \\
    & = \summ{j=0}{d}{\nth{\blockdiffn{d}{\prim{n}}{\astr}}{i + j\cdot \prim{n}}} & \text{(IH)} \\
    & = \summ{j=0}{d}{\summ{k=0}{d}{\nth{\astr}{i + (j + k) \cdot \prim{n}}}} & \text{(IH)} \\
    & = \summ{\phantom{(}\ell=0\phantom{)}}{2d\vphantom{(}}{\summ{j=\max(0,\ell-d)}{\min(d,\ell)}{\nth{\astr}{i + \ell\cdot\prim{n}}}} & \text{\eqref{eq:swap:sums}} \\
    & = \summ{\ell=0}{2d}{\Big(\summ{j=\max(0,\ell-d)}{\min(d,\ell)}{1}\Big)\cdot{\nth{\astr}{i + \ell\cdot\prim{n}}}}
  \end{align*}  
  Let us abbreviate the subexpression $\summ{j=\max(0,\ell-d)}{\min(d,\ell)}{1}$
  by $\funap{S}{\ell}$.
  For both $\ell \leq d$ and $\ell \gt d$ we have that 
  $\funap{S}{\ell} = \ell + 1$, and hence we can continue as follows:
  \begin{align*}
    & = \summ{\ell=0}{2d}{\funap{S}{\ell}\cdot{\nth{\astr}{i + \ell\cdot\prim{n}}}} \\
    & = \summ{\ell=0}{d}{\funap{S}{2\ell}\cdot{\nth{\astr}{i + 2\ell\cdot\prim{n}}}}
        + \summ{\ell=0}{d-1}{\funap{S}{2\ell+1}\cdot{\nth{\astr}{i + (2\ell+1)\cdot\prim{n}}}} \\
    & = \summ{\ell=0}{d}{1\cdot{\nth{\astr}{i + 2\ell\cdot\prim{n}}}}
        + \summ{\ell=0}{d-1}{0\cdot{\nth{\astr}{i + (2\ell+1)\cdot\prim{n}}}} \\
    & = \summ{\ell=0}{d}{\nth{\astr}{i + \ell\cdot n}}
  \end{align*}
  Thus, we have shown, for $n=2^m$, the equality
  $\nth{\blockdiffn{d}{n}{\astr}}{i} = \summ{\ell=0}{d}{\nth{\astr}{i + \ell\cdot n}}$.
\end{proof}

For $d=1$, Lemma~\ref{lem:blockdiffn{d}{2^m}} gives 
$\nth{\diffn{n}{\astr}}{i} = \nth{\astr}{i} + \nth{\astr}{i+n}$, for $n=2^m$.
Indeed, for $n=2^m$ all $\binom{n}{k}$ for $1\leq k\leq n-1$ are even, 
and this in turn follows from the fact that 
all rows $\pascalrow{1}{n}$ with $n=2^{m}-1$ 
consist of odd numbers only. 
From the latter observation we also obtain:
\begin{lemma}\label{lem:diffn:blockdiff}
  If $n=2^{m}-1$ for some $m\in\nat$, 
  then $\sdiffn{n} = \sblockdiff{n}$.
\end{lemma}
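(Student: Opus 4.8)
The plan is to reduce the statement to the classical fact that every binomial coefficient in row $2^{m}-1$ of Pascal's triangle is odd, and then to read off the result from the Pascal correspondence~\eqref{eq:diffn:triangle:correspondence}. Concretely, for $n = 2^{m}-1$ that correspondence gives, for every $\astr\in\str{\bit}$ and $i\in\nat$,
\[
  \nth{\diffn{n}{\astr}}{i} = \summ{k=0}{n}{\binom{n}{k}\cdot\nth{\astr}{i+k}},
\]
an identity in $\bit$, so that each coefficient $\binom{n}{k}$ may be replaced by its residue modulo~$2$. By Definition~\ref{def:diff} we also have $\nth{\blockdiff{n}{\astr}}{i} = \summ{j=0}{n}{\nth{\astr}{i+j}}$. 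Hence it suffices to prove that $\binom{n}{k}$ is odd for all $0\le k\le n$ when $n = 2^{m}-1$ --- which is precisely the observation recorded just above this lemma.

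For that parity fact I would argue in the polynomial ring $\mbb{F}_{2}[x]$. Squaring is additive there, so $(1+x)^{2} = 1+x^{2}$, and induction on $m$ yields $(1+x)^{2^{m}} = 1 + x^{2^{m}}$. Moreover $(1+x)\cdot\bigl(\summ{k=0}{N}{x^{k}}\bigr) = 1 + x^{N+1}$ in $\mbb{F}_{2}[x]$, since the intermediate terms cancel in pairs. Cancelling the nonzero factor $1+x$ from these two identities with $N = 2^{m}-1$ gives $(1+x)^{2^{m}-1} = \summ{k=0}{2^{m}-1}{x^{k}}$, so that $\binom{2^{m}-1}{k}\equiv 1\pmod{2}$ for every $0\le k\le 2^{m}-1$. (Alternatively this is Lucas' theorem applied to $2^{m}-1 = (\underbrace{1\cdots1}_{m})_{2}$, and it is exactly the statement that the rows $\pascalrow{1}{n}$ with $n = 2^{m}-1$ consist of odd numbers only.)

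Combining the two ingredients, for $n = 2^{m}-1$, any $\astr\in\str{\bit}$ and any $i\in\nat$,
\[
  \nth{\diffn{n}{\astr}}{i}
  = \summ{k=0}{n}{\binom{n}{k}\cdot\nth{\astr}{i+k}}
  = \summ{k=0}{n}{\nth{\astr}{i+k}}
  = \nth{\blockdiff{n}{\astr}}{i},
\]
and since this holds for all $\astr$ and all $i$ we obtain $\sdiffn{n} = \sblockdiff{n}$. I do not expect a genuine obstacle in this argument: the only substantive point is the parity of the binomial coefficients, dispatched by the short $\mbb{F}_{2}[x]$ computation above; the one thing to be careful about is that~\eqref{eq:diffn:triangle:correspondence} is an equation over $\bit$, which is what legitimises passing to residues modulo~$2$.
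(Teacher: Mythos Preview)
Your argument is correct. It is, in fact, exactly the route the paper's surrounding text advertises: the sentence just before the lemma says that the result follows from the fact that all entries of row $2^{m}-1$ of Pascal's triangle are odd, and you have spelled out precisely that deduction via~\eqref{eq:diffn:triangle:correspondence} together with a clean $\mathbb{F}_{2}[x]$ proof of the parity fact.

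The paper's written proof, however, takes a different path. It argues by induction on~$m$: writing $n = 2n'+1$ with $n' = 2^{m-1}-1$, it factors $\sdiffn{n} = \sdiff\circ\sdiffn{n'}\circ\sdiffn{n'}$, uses the induction hypothesis to replace each $\sdiffn{n'}$ by $\sblockdiff{n'}$, and then invokes Lemma~\ref{lem:blockdiffn{d}{2^m}} with $d = n'$ and exponent~$2$ to obtain $\nth{\blockdiffn{n'}{2}{\astr}}{i} = \summ{j=0}{n'}{\nth{\astr}{i+2j}}$; the final application of $\sdiff$ interleaves the even- and odd-indexed sums into $\summ{j=0}{n}{\nth{\astr}{i+j}}$. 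Your approach is shorter and more transparent, and it uses only the $d=1$ case of the Pascal correspondence (Lemma~\ref{lem:blockdiffn:triangle:correspondence}). The paper's approach avoids a separate parity argument and instead recycles Lemma~\ref{lem:blockdiffn{d}{2^m}}, which is the workhorse for the main theorem anyway; it also illustrates how the generalised block operators $\sblockdiff{d}$ interact, at the cost of a slightly longer calculation.
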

\begin{proof}
  For all $m,n\in\nat$, we prove that if $n=2^m-1$ then 
  $\nth{\diffn{n}{\astr}}{i} = \nth{\blockdiff{n}{\astr}}{i}$,
  for all $\astr\in\str{\bit}$ and $i\in\nat$,
  by induction on $m$.
  If $m=0$, then $n=0$ and we have $\diffn{0}{\astr} = \astr = \blockdiff{0}{\astr}$.
  If $m=\prim{m}+1$, then $n = 2\prim{n}+1$ with $\prim{n} = 2^{\prim{m}}-1$,
  and we reason as follows:
  \begin{align*}
    \nth{\diffn{n}{\astr}}{i}
    & = \nth{\diff{\diffn{\prim{n}}{\diffn{\prim{n}}{\astr}}}}{i} \\
    & = \nth{\diffn{\prim{n}}{\diffn{\prim{n}}{\astr}}}{i} 
        + \nth{\diffn{\prim{n}}{\diffn{\prim{n}}{\astr}}}{i+1} \\ 
    & = \nth{\blockdiff{\prim{n}}{\blockdiff{\prim{n}}{\astr}}}{i} 
        + \nth{\blockdiff{\prim{n}}{\blockdiff{\prim{n}}{\astr}}}{i+1}
        & \text{($4\times\text{IH}$)} \\
    & = \nth{\blockdiffn{\prim{n}}{2}{\astr}}{i} 
        + \nth{\blockdiffn{\prim{n}}{2}{\astr}}{i+1} \\
    & = \summ{j=0}{\prim{n}}{\nth{\astr}{i+2j}}
        + \summ{j=0}{\prim{n}}{\nth{\astr}{i+2j+1}}	
        & \text{($2\times\text{Lemma~\ref{lem:blockdiffn{d}{2^m}}}$)} \\
    & = \summ{j=0}{2\prim{n}+1}{\nth{\astr}{i+j}} \\
    & = \nth{\blockdiff{n}{\astr}}{i}
  \end{align*}

\end{proof}

\begin{definition}
  Let $\aset\neq\emptyset$.
  A sequence $\astr\in\str{\aset}$ is \emph{(eventually) periodic} 
  if there exist $p\geq 1$ and $n_0\in\nat$ such that 
  $\myall{n\geq n_0}{\nth{\astr}{n+p} = \nth{\astr}{n}}$,
  where we call $p$ the \emph{period}, and $n_0$ the \emph{offset} of $\astr$.
  %
  For a function $f\funin\str{\aset}\to\str{\aset}$, 
  we say that \emph{$f$ strongly preserves} 
  periodicity
  if $\funap{f}{\astr}$ has the same period and offset as $\astr$.
\end{definition}
In other words, $\astr$ is eventually periodic 
if and only if there exist $p\geq 1$ and $n_0\in\nat$ 
such that for all $n_1, n_2 \geq n_0$ with $\congrmod{n_1}{n_2}{p}$
we have $\nth{\astr}{n_1} = \nth{\astr}{n_2}$.
Note that we do not require $p$ and $n_0$ to be minimal.

Let us drop the adjective `eventual'
and take `periodic' to mean `eventually periodic'.

\begin{lemma}\label{lem:blockdiff:preserves:period}
  The difference operator $\sblockdiff{d}$ strongly preserves periodicity.
\end{lemma}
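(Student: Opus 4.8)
The plan is simply to unfold the definition $\nth{\blockdiff{d}{\astr}}{i}=\summ{j=0}{d}{\nth{\astr}{i+j}}$ and transport the periodicity of $\astr$ through the finite sum. First I would fix an arbitrary $\astr\in\str{\bit}$ that is periodic with period $p\geq 1$ and offset $n_0$, so that $\nth{\astr}{n+p}=\nth{\astr}{n}$ holds for every $n\geq n_0$, and then argue that the very same $p$ and $n_0$ witness periodicity of $\blockdiff{d}{\astr}$.

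To that end, take any index $i\geq n_0$. Then $\nth{\blockdiff{d}{\astr}}{i+p}=\summ{j=0}{d}{\nth{\astr}{i+p+j}}$, and for each $j$ with $0\leq j\leq d$ the index $i+j$ satisfies $i+j\geq n_0$, whence $\nth{\astr}{i+j+p}=\nth{\astr}{i+j}$ by periodicity of $\astr$. Adding these $d+1$ equalities modulo $2$ yields $\nth{\blockdiff{d}{\astr}}{i+p}=\summ{j=0}{d}{\nth{\astr}{i+j}}=\nth{\blockdiff{d}{\astr}}{i}$. Since $i\geq n_0$ was arbitrary, $\blockdiff{d}{\astr}$ is periodic with period $p$ and offset $n_0$, i.e.\ $\sblockdiff{d}$ strongly preserves periodicity.

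There is no real obstacle here: the one point to be careful about is that the sliding window $\{i,i+1,\ldots,i+d\}$ extends to the \emph{right} of $i$, so it lies entirely within $\{n_0,n_0+1,\ldots\}$ exactly because its leftmost index is $i\geq n_0$ — had the window instead been centred at $i$ one would need $i\geq n_0+d$, i.e.\ a shifted offset. As in the remark preceding the lemma, we are content to reuse the (not necessarily minimal) pair $(p,n_0)$ rather than compute a minimal period for $\blockdiff{d}{\astr}$.
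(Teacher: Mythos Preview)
Your proof is correct and follows exactly the same route as the paper: unfold the definition of $\sblockdiff{d}$, apply periodicity of $\astr$ termwise in the finite sum, and conclude that the same pair $(p,n_0)$ works for $\blockdiff{d}{\astr}$. The paper's argument is slightly more terse, but the content is identical.
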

\begin{proof}
  Let $\astr\in\str{\bit}$ be a periodic stream with period $p\in\nat$
  and offset~$n_0\in\nat$.
  Then it immediately follows that:
  \begin{align*}
    \nth{\blockdiff{d}{\astr}}{n+p} 
    = \summ{j=0}{d}{\nth{\astr}{n+p+j}}
    = \summ{j=0}{d}{\nth{\astr}{n+j}}
    = \nth{\blockdiff{d}{\astr}}{n} 
  \end{align*}
  for all $n \geq n_0$, 
  and hence $\blockdiff{d}{\astr}$ is periodic with period~$p$ and offset~$n_0$.
\end{proof}
\noindent
By linearity of $\sblockdiff{d}$ we obtain that all $\sblockdiffn{d}{m}$ 
strongly preserve periodicity.

We come to the main theorem of our contribution,
roughly stating that horizontal periodicity of a $\sdiff$-orbit 
implies vertical periodicity, and vice versa.

\begin{theorem}\label{thm:blockdifforbit:periodicity}
  Let $d\in\nat$. 
  A stream $\astr\in\str{\bit}$ is periodic 
  if and only if 
  $\blockdifforbit{d}{\astr}$ is periodic.
\end{theorem}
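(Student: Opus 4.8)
I will prove the two implications separately, and may assume $d\geq 1$ throughout: for $d=0$ the operator $\sblockdiff{0}$ is the identity, so $\blockdifforbit{0}{\astr}$ is the constant sequence $\astr,\astr,\dots$ and the statement is immediate. For the implication from ``$\astr$ periodic'' to ``$\blockdifforbit{d}{\astr}$ periodic'', suppose $\astr$ has period $p$ and offset $n_0$. By Lemma~\ref{lem:blockdiff:preserves:period}, iterated, every $\blockdiffn{d}{n}{\astr}$ again has period $p$ and offset $n_0$, hence is determined by its first $n_0+p$ entries; so the rows of the orbit lie in a set of at most $2^{n_0+p}$ streams. By the pigeonhole principle $\blockdiffn{d}{n_1}{\astr}=\blockdiffn{d}{n_2}{\astr}$ for some $n_1<n_2$, and applying $\sblockdiffn{d}{k}$ to both sides for every $k\in\nat$ gives $\blockdiffn{d}{n_1+k}{\astr}=\blockdiffn{d}{n_2+k}{\astr}$; thus $\blockdifforbit{d}{\astr}$ is eventually periodic with period $n_2-n_1$ and offset $n_1$. (Lemma~\ref{lem:blockdiffn{d}{2^m}} can be used instead to locate such a repeated row among the powers of $2$: for $2^m>n_0$ the stream $\blockdiffn{d}{2^m}{\astr}$ depends only on $2^m\bmod p$, and $(2^m\bmod p)_m$ is eventually periodic; this pins the period down more precisely.)

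For the converse, suppose $\blockdifforbit{d}{\astr}$ has period $q\geq 1$ and offset $m_0$, and set $\bstr=\blockdiffn{d}{m_0}{\astr}$, so that $\blockdiffn{d}{q}{\bstr}=\bstr$. The crux is to show that such a fixed point of $\sblockdiffn{d}{q}$ is eventually periodic. By Lemma~\ref{lem:blockdiffn:triangle:correspondence}, modulo $2$ we have $\nth{\bstr}{i}=\nth{\blockdiffn{d}{q}{\bstr}}{i}=\summ{k=0}{dq}{\pascal{d}{q}{k}\cdot\nth{\bstr}{i+k}}$ for every $i$, and the outer diagonals of $\spascal{d}$ consist of $1$s, so $\pascal{d}{q}{0}=\pascal{d}{q}{dq}=1$; the term $\nth{\bstr}{i}$ then cancels and we obtain the linear recurrence $\nth{\bstr}{i+dq}\equiv\summ{k=1}{dq-1}{\pascal{d}{q}{k}\cdot\nth{\bstr}{i+k}}\pmod 2$. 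Hence the window $(\nth{\bstr}{i+1},\dots,\nth{\bstr}{i+dq-1})$ evolves deterministically over the finite set $\bit^{dq-1}$ as $i$ increases, so $\bstr$ is eventually periodic.

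It remains to climb from $\bstr=\blockdiffn{d}{m_0}{\astr}$ back to $\astr$, i.e.\ to show that if $\blockdiff{d}{\cstr}$ is eventually periodic then so is $\cstr$. For this I use the identity $\nth{\blockdiff{d}{\cstr}}{i}\xor\nth{\blockdiff{d}{\cstr}}{i+1}=\nth{\cstr}{i}\xor\nth{\cstr}{i+d+1}$: if $\blockdiff{d}{\cstr}$ is eventually periodic with period $P$ and offset $k_0$, then for $i\geq k_0$ the state $(\nth{\cstr}{i},\dots,\nth{\cstr}{i+d},\,i\bmod P)$ evolves deterministically over the finite set $\bit^{d+1}\times\{0,\dots,P-1\}$, so $\cstr$ is eventually periodic. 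Applying this $m_0$ times along $\blockdiffn{d}{m_0}{\astr},\blockdiffn{d}{m_0-1}{\astr},\dots,\blockdiffn{d}{0}{\astr}=\astr$ shows that $\astr$ is eventually periodic, which completes the proof.

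I expect the main difficulty to be the converse direction, and within it the fixed-point step: since $\sblockdiff{d}$ is very far from injective one cannot simply invert it, and the point that has to be supplied is that $\blockdiffn{d}{q}{\bstr}=\bstr$ forces $\bstr$ to satisfy a constant-coefficient linear recurrence over $\bit$, after which finiteness of the state space does the work. A lesser subtlety in the forward direction is that finiteness of the set of possible rows does not by itself yield periodicity of the orbit --- one additionally needs that once two rows coincide, all later rows coincide.
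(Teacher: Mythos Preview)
Your proof is correct but organised differently from the paper's in both directions. For the forward implication the paper does not use pigeonhole on the finite set of possible rows; it appeals to Lemma~\ref{lem:blockdiffn{d}{2^m}} to exhibit two equal rows directly: choosing powers $N_1=2^{m_1}<N_2=2^{m_2}$ beyond the offset with $\congrmod{N_1}{N_2}{p}$, the lemma together with periodicity of $\astr$ gives $\blockdiffn{d}{N_1}{\astr}=\blockdiffn{d}{N_2}{\astr}$ outright. Your pigeonhole argument is more elementary and needs no structural lemma; the paper's version (which you mention parenthetically) locates the repeated rows explicitly. For the converse the paper avoids your two-stage route (first show the fixed point $\bstr=\blockdiffn{d}{m_0}{\astr}$ is periodic via Lemma~\ref{lem:blockdiffn:triangle:correspondence}, then lift periodicity back through $m_0$ applications of $\sblockdiff{d}$ using the telescoping identity): it again invokes Lemma~\ref{lem:blockdiffn{d}{2^m}}, now with the orbit's period, to obtain directly the recurrence $\nth{\astr}{i+dN_2}=\summ{j=0}{d}{\nth{\astr}{i+jN_1}}+\summ{j=0}{d-1}{\nth{\astr}{i+jN_2}}$ on $\astr$ itself, so no intermediate fixed point and no lifting step are needed. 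In short, the paper uses the single Lemma~\ref{lem:blockdiffn{d}{2^m}} uniformly for both implications and reaches $\astr$ in one shot; your route is longer in the converse direction but yields as a by-product the pleasant standalone fact that $\sblockdiff{d}$ reflects eventual periodicity.
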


\proof
  Let $d\in\nat$. We prove both implications separately.
  \begin{itemize}

  \item[\textit{(only if)}]

	Let $\astr$ be a periodic stream
	with period $p$ and offset~$n_0$.
	Furthermore, let $N_1 = 2^{m_1}$ and $N_2 = 2^{m_2}$
	for some $m_1,m_2\in\nat$ such that
	$n_0 < N_1 < N_2$ and $\congrmod{N_1}{N_2}{p}$
	(these are bound to exist, as there are finitely many equivalence classes 
	$\{ m \where \congrmod{m}{n}{p} \}$
	and infinitely many powers of~$2$).
	Then, by Lemma~\ref{lem:blockdiffn{d}{2^m}} 
	and periodicity of $\astr$, we find:
	\begin{align*}
	  \nth{\blockdiffn{d}{N_1}{\astr}}{n} 
	    = \summ{j=0}{d}{\nth{\astr}{n+jN_1}}
	  & = \summ{j=0}{d}{\nth{\astr}{n+jN_2}}
	    = \nth{\blockdiffn{d}{N_2}{\astr}}{n} 
	\end{align*}
	for all $n \geq n_0$,
	and hence 
	$\blockdifforbit{d}{\astr}$ is periodic.

  \item[\textit{(if)}]

	Let $\blockdifforbit{d}{\astr}$ 
	be periodic with period $p$ and offset~$n_0$,
	i.e., $\blockdiffn{d}{n}{\astr} = \blockdiffn{d}{n+p}{\astr}$, for all $n\geq n_0$.
	Again, let $N_1 = 2^{m_1}$ and $N_2 = 2^{m_2}$
	for some $m_1,m_2\in\nat$ such that
	$n_0 < N_1 < N_2$ and $\congrmod{N_1}{N_2}{p}$.
	Then we have, for all $i\in\nat$, $\nth{\blockdiffn{d}{N_1}{\astr}}{i} = \nth{\blockdiffn{d}{N_2}{\astr}}{i}$.
	From Lemma~\ref{lem:blockdiffn{d}{2^m}} it then follows that
	$\summ{j=0}{d}{\nth{\astr}{i + jN_1}} = \summ{j=0}{d}{\nth{\astr}{i + jN_2}}$.
	Hence, we obtain, still for all $i\in\nat$:
	\begin{align*}
	  \nth{\astr}{i + dN_2} & = \summ{j=0}{d}{\nth{\astr}{i + jN_1}} + \summ{j=0}{d-1}{\nth{\astr}{i + jN_2}}
	\end{align*}
	In other words, the element $\nth{\astr}{i + dN_2}$ 
	is uniquely determined by the $dN_2$ preceding elements of $\astr$.
	We conclude by observing that there are only finitely many blocks of length $dN_2$,
	and hence there must come a repetition: 
	i.e.\ we find $dN_2 \leq i_1 \lt i_2$ 
	such that $\nth{\astr}{i_1+n} = \nth{\astr}{i_2+n}$, for all $n\in\nat$.
	Hence $\astr$ is periodic.
	\qed
  \end{itemize}

\section{The $\sdiff$-orbits of some non-periodic streams: \\ Fibonacci, Mephisto, \Sierpinski}
We define some non-periodic streams and look at their $\sdiff$-orbits.
In Table~\ref{table:specs} 
\begin{table}[ht!]
  \newcommand{\h}[1]{\funap{\sub{\msf{h}}{#1}}}
\newcommand{\hmorse}{\h{\morse}}
\newcommand{\hmephisto}{\h{\mephisto}}
\newcommand{\hfib}{\h{\fib}}
\newcommand{\htoeplitz}{\h{\perioddoubling}}
\newcommand{\sierpinskiseed}{\sub{\msf{w}}{\sierpinski}}
\newcommand{\toepseed}{\sub{\msf{w}}{\perioddoubling}}
\renewcommand{\inv}{\funap{\msf{inv}}}

%
\begin{align*}
  \morse & = \cons{0}{\zipnm{1}{1}{\inv{\morse}}{\tail{\morse}}} 
  \\[1ex]
  \zipnm{n}{m}{\astr}{\bstr} &= \cons{\take{n}{\astr}}{\zipnm{m}{n}{\bstr}{\tailn{n}{\astr}}}
  \\
  \tail{\astr} &= \astr
  \\
  \inv{\cons{0}{\astr}} &= \cons{1}{\inv{\astr}}
  \\
  \inv{\cons{1}{\astr}} &= \cons{0}{\inv{\astr}}
  \\[1ex]
  \perioddoubling & = \zipnm{3}{1}{\toepseed}{\perioddoubling}
  \\
  \toepseed & = \cons{1}{\cons{0}{\cons{1}{\toepseed}}}
  \\[1ex]
  \mephisto & = \hmephisto{\cons{0}{\tail{\mephisto}}} 
  \\
  \hmephisto{\cons{0}{\astr}} &= \cons{0}{\cons{0}{\cons{1}{\hmephisto{\astr}}}} 
  \\
  \hmephisto{\cons{1}{\astr}} &= \cons{1}{\cons{1}{\cons{0}{\hmephisto{\astr}}}} 
  \\[1ex]
  \sierpinski &= \zipnm{8}{1}{\sierpinskiseed}{\sierpinski} 
  \\
  \sierpinskiseed & = \cons{1}{\cons{1}{\cons{0}{\cons{0}{\cons{0}{\cons{0}{\cons{1}{\cons{1}{\inv{\sierpinskiseed}}}}}}}}} 
  \\[1ex]
  \fib & = \hfib{\cons{1}{\tail{\fib}}} 
  \\
  \hfib{\cons{1}{\astr}} &= \cons{1}{\cons{0}{\hfib{\astr}}}
  \\
  \hfib{\cons{0}{\astr}} &= \cons{1}{\hfib{\astr}}
\end{align*}

  \caption{\textit{Stream specifications.}}
  \label{table:specs}
\end{table}
we give PSF specifications of 
the \thuemorse{} sequence~$\morse$, the period doubling sequence~$\perioddoubling$, 
the Fibonacci word~$\fib$, the Mephisto Waltz~$\mephisto$, 
and the stream $S$ which we call the \emph{\Sierpinski{} stream}.
(Of course, alternative specifications exist.)

As far as we know, the \Sierpinski{} stream does not occur in the literature.
We have derived it from the construction of the `\Sierpinski{} arrowhead curve', 
see Figure~\ref{fig:sierpinsky}.
\begin{figure}[h!]
\rotatebox{0.25}{\input{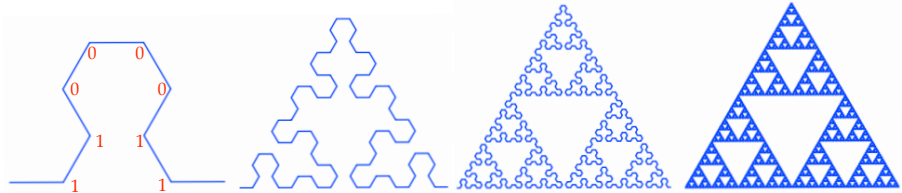}}
\caption{\textit{Construction of the \Sierpinski{} arrowhead curve.}}
\label{fig:sierpinsky}
\end{figure}
The curve is obtained back from the stream~$\sierpinski$
by interpreting its entries 
\( 
  \sierpinski = 
  1 1 0 0 0 0 1 1 1 \,
  0 0 1 1 1 1 0 0 1 \,
  1 1 0 0 0 0 1 1 0 \,
  0 0 1 1 1 1 0 0 0 \,
  1 1 0 0 0 0 1 1 0 \,
  \ldots
\) 
as turtle drawing instructions: 
$1$ means move forward one unit length and turn to the left $\pi/3$,
and $0$ means move forward one unit length and turn to the right $\pi/3$.
In this way, the \Sierpinski{} curve arises as the Hausdorff limit of the finite approximations
(scaling back in size when necessary).

Theorem~\ref{thm:blockdifforbit:periodicity} implies that any stream $\astr$ 
which is equal to one of its differences $\blockdiffn{d}{n}{\astr}$
is periodic. 
Put differently, no two differences of a non-periodic stream, 
e.g.\ the \thuemorse{} sequence~$\morse$, 
are the same.
The $\sdiff$-orbit of $\morse$ is depicted in Figure~\ref{fig:diffn:morse}.
Observe that subsequences of consecutive $0$s become larger and larger.

We do not see this `calming down' aspect in Figure~\ref{fig:diffn:fib},
\begin{figure}[ht!]
  \begin{center}
    \scalebox{.5}{\includegraphics{./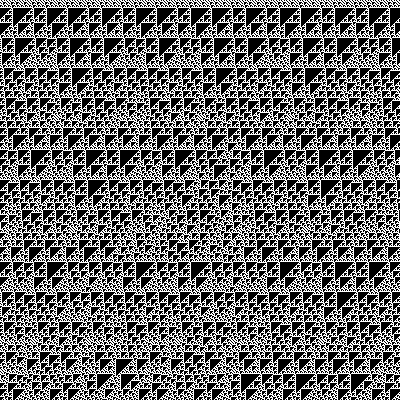}}
  \end{center}
  \caption{%
    \textit{The first $400$ differences of the Fibonacci stream.
    }
  }
  \label{fig:diffn:fib}
\end{figure}
which displays the $\sdiff$-orbit of the Fibonacci stream 
that can be defined as the fixed point of the substitution 
$0\to1,1\to10$ starting on $1$.

One more experiment with $\sdiff$-orbits is shown 
in Figure~\ref{fig:diffn:sierpinski:mephisto}, 
\begin{figure}[ht!]
  \begin{minipage}{.48\textwidth}
  \begin{center}
   \zoom{./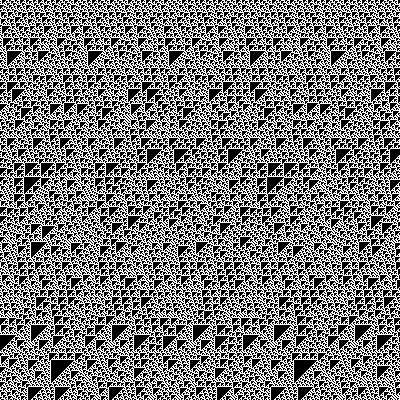}{./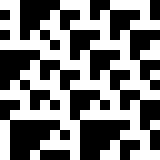}
  \end{center}
  \end{minipage}
  \hfill
  \begin{minipage}{.48\textwidth}
  \begin{center}
   \zoom{./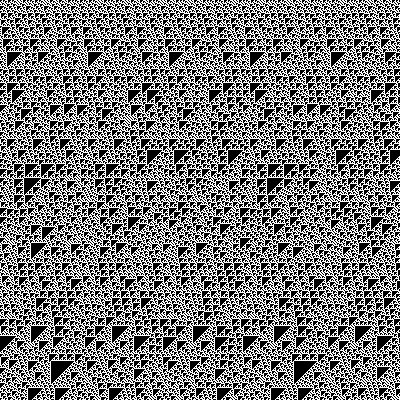}{./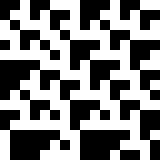}
  \end{center}
  \end{minipage}
  \caption{
    \textit{%
	  Comparing the `fingerprints' $\difforbit{\sierpinski}$ and $\difforbit{\mephisto}$
	  of the \Sierpinski{} stream~$\sierpinski$ (left), and the Mephisto Waltz~$\mephisto$.
	  We find that $\diffn{2}{\sierpinski} = \diffn{3}{\mephisto}$!%
    }
  }
  \label{fig:diffn:sierpinski:mephisto}
\end{figure}
where the $\sdiff$-orbits of the \Sierpinski{} stream~$\sierpinski$ 
and the Mephisto Waltz~$\mephisto$ are displayed.
It is readily seen that both patterns seem identical, 
from the distribution of the black triangles.
That they are indeed identical is revealed by a closer inspection 
of the first couple of rows; it turns out that the third row of the left orbit, 
i.e.\ $\diffn{2}{\sierpinski}$, is identical to the fourth row of the right orbit, 
i.e.\ $\diffn{3}{\mephisto}$.
Indeed, the $16{\times}16$ enlargements show at these row-positions 
both the prefix $1100110111100111$ of length 16.

\section{Concluding Remarks}

\begin{enumerate}

  \item
	An interesting specific question is whether the dynamical system with as universe
	the $01$-streams and $\sdiff$ as iterator function, is \emph{chaotic} --- 
	such as the dynamical system of $01$-streams with `tail' or `shift' is, as is well-known;
	see e.g.\ \cite[p.~118, Coroll.~11.22]{holm:2000} or~\cite{bank:drag:jone:2003}.
	To this end it suffices to show that the set of points (streams) periodical under $\sdiff$ are dense 
	in the set of all streams, and second that there exists a stream whose $\sdiff$-orbit lies dense 
	in the set of all streams, thus ensuring the topological transitivity of the iterator function $\sdiff$. 
	The third ingredient necessary for $\sdiff$ to be chaotic, namely sensitive dependence on initial 
	conditions, seems clearly to be the case.

  \item
	In general it would be interesting to investigate typical questions in \emph{symbolic dynamics} (see, e.g., \cite{holm:2000})
	for the dynamical systems formed by infinite streams, equip\-ped with continuous stream
	functions that are PSF- or FST-definable.

  \item\label{item:conclusion:3}
    Note that the $\stail$-orbit starting with $\morse$ exhibits the phenomenon of
	\emph{almost periodicity}: in the usual metric on infinite streams, 
	$\tailn{n}{\morse}$ can be made arbitrarily close to $\morse$, 
	by choosing $n$ large enough. The proof is simple. 
   
   Also the orbit $\blockdifforbit{2}{\morse}$
   seems to be almost periodic.

  \item
    The observation in~\eqref{item:conclusion:3} leads to the following question: 
    for which FST-definable operations
    and which starting streams is the orbit
    almost periodic?

\end{enumerate}

\bibliography{main}

\end{document}